\theoremstyle{plain}
\newtheorem{theorem}{Theorem}
\newtheorem{lemma}{Lemma}
\theoremstyle{definition}
\theoremstyle{remark}
\date{}
\begin{document}

\title{Decision trees for regular factorial languages}
\author{Mikhail Moshkov\thanks{Computer, Electrical and Mathematical Sciences and Engineering Division,
King Abdullah University of Science and Technology (KAUST),
Thuwal 23955-6900, Saudi Arabia. Email: mikhail.moshkov@kaust.edu.sa.
}}
\maketitle

\begin{abstract}
In this paper, we study arbitrary regular factorial languages over a finite
alphabet $\Sigma$. For the set of words $L(n)$ of the length $n$ belonging to a
regular factorial language $L$, we investigate the depth of decision trees
solving the recognition and the membership problems deterministically and
nondeterministically. In the case of recognition problem, for a given word
from $L(n)$, we should recognize it using queries each of which, for some $%
i\in \{1,\ldots ,n\}$, returns the $i$th letter of the word. In the case of
membership problem, for a given word over the alphabet $\Sigma$ of the length 
$n$, we should recognize if it belongs to the set $L(n)$ using the same
queries. For a given problem and type of trees, instead of the minimum depth 
$h(n)$ of a decision tree of the considered type solving the problem for  
$L(n)$, we study the smoothed minimum depth  $H(n)=\max\{h(m):m\le n\}$.
With the growth of $n$, the smoothed minimum depth of decision trees solving
the problem of recognition deterministically is either bounded from above by
a constant, or grows as a logarithm, or linearly.
For other cases (decision trees solving the problem of recognition
nondeterministically, and decision trees solving the membership problem
deterministically and nondeterministically), with the growth of $n$, the smoothed
minimum depth of decision trees is either bounded from above by a constant
or grows linearly. As corollaries of the obtained results, we study joint 
behavior of smoothed minimum depths of decision trees for the considered 
four cases and describe five complexity classes of regular factorial languages. 
We also investigate  the class of regular factorial languages over the alphabet 
$\{0,1\}$ each of which is given by one forbidden word.
\end{abstract}

{\it Keywords}: regular factorial language, recognition problem, membership problem, 
deterministic decision tree, nondeterministic decision tree.
\section{Introduction}
\label{sect1}

In this paper, we study arbitrary regular factorial languages over a finite
alphabet $\Sigma$.
For the set of words $L(n)$ of the length $n$ belonging to a regular factorial language $L$, we investigate the depth of decision trees
solving the recognition and the membership problems deterministically and
nondeterministically. In the case of recognition problem, for a given word
from $L(n)$, we should recognize it using queries each of which, for some $%
i\in \{1,\ldots ,n\}$, returns the $i$th letter of the word. In the case of
membership problem, for a given word over the alphabet $\Sigma$ of the length $n$%
, we should recognize if it belongs to $L(n)$ using the same queries.

 For a given problem (problem of recognition or membership problem) and type of trees (solving the problem deterministically or nondeterministically), instead of the minimum depth $h(n)$ of a decision tree of the considered type solving the problem for  $L(n)$, we study the smoothed minimum depth  $H(n)=\max\{h(m):m\le n\}$. The reason is that the graph of the function $h(n)$ may have sawtooth form.

For an arbitrary regular factorial language, with the growth of $n$, the smoot\-hed
minimum depth of decision trees solving the problem of recognition
deterministically is either bounded from above by a constant, or grows as a
logarithm, or linearly. These results follow immediately from more general, obtained in \cite{Moshkov00} for arbitrary regular languages.

For other cases (decision
trees solving the problem of recognition nondeterministically, and decision
trees solving the membership problem deterministically and
nondeterministically), with the growth of $n$, the smoothed minimum depth of decision
trees is either bounded from above by a constant, or grows linearly.
In the conference paper \cite{Moshkov97}, a classification of arbitrary regular languages depending on the smoothed minimum depth of decision trees solving the problem of recognition nondeterministically was announced without proofs. In the present paper, we consider simpler classification for regular factorial languages with full proof. Results related to the decision trees solving the membership problem are new.

As corollaries of the obtained results, we study joint behavior of smoothed minimum depths of decision trees for the considered four cases and describe five complexity classes of regular factorial languages. We also investigate  the class of regular factorial languages over the alphabet $E=\{0,1\}$ each of which is given by one forbidden word.

We should mention a recent paper \cite{Moshkov20} in which similar results were obtained for subword-closed languages over the alphabet  $E$, where by subword we mean subsequence. It is clear that each subword-closed language is a factorial language. Moreover, each subword-closed language over a finite alphabet is a regular language \cite{Haines67}. One can show that the language $L(00)$ over the alphabet $E$ given by one forbidden word $00$ is a regular factorial language, which is not subword-closed. Therefore the class of subword-closed languages over the alphabet  $E$ is a proper subclass of the class of regular factorial languages over the alphabet $E$.

The main difference between the present paper and \cite{Moshkov20} is that, in the latter paper,
we do not assume that the subword-closed languages are
given by sources (partial deterministic finite automata). Instead of this, we describe simple criteria (based on the presence in the language of words of special types) for
the behavior of the minimum depths of decision trees solving the problem of
recognition deterministically and nondeterministically. Differently formulated criteria for the behavior of the minimum depth of decision trees solving the recognition problem require very different proofs. One more difference is that in \cite{Moshkov20} we directly consider the minimum depth of decision trees since it is a fairly smooth function for subword-closed languages.

The rest of the paper is organized as follows. In Section \ref{sect2}, we
consider main notions, in Section \ref{sect3} -- main results, and in
Section \ref{sect4} -- two corollaries of these results.

\section{Main Notions}

\label{sect2}

In this section, we discuss the notions related to regular factorial
languages and decision trees solving problems of recognition and membership
for these languages.

\subsection{Regular Factorial Languages}

\label{sect2.1}

Let $\omega =\{0,1,2,\ldots \}$ be the set of nonnegative integers and $\Sigma$
be a finite alphabet with at least two letters. By $\Sigma^{\ast }$, we denote the set of all finite words
over the alphabet $\Sigma$, including the empty word $\lambda $. A word $w\in
\Sigma^{\ast}$ is called a factor of a word $u\in \Sigma^{\ast}$ if $u=v_{1}wv_{2}$
and $v_{1},v_{2}\in \Sigma^{\ast}$. A language $L\subseteq \Sigma^{\ast}$ is called
factorial if it contains all factors of its words. A word $w\in \Sigma^{\ast}$
is called a minimal forbidden word for $L$ if $w\notin L$ and all proper
factors of $w$ belong to $L$. We denote by $\mathit{MF}(L)$ the language of minimal
forbidden words for $L$. It is known \cite{Crochemore98} that a factorial language $L$ is
regular if and only if the language $\mathit{MF}(L)$ is regular. In particular, a
factorial language $L$ with a finite set of minimal forbidden words $\mathit{MF}(L)$
is regular. In this paper, we study arbitrary nonempty regular factorial
languages.

A source over the alphabet $\Sigma$ is a triple $I=(G,q_{0},Q)$, where $G$ is a finite
directed graph, possibly with multiple edges and loops, in which each edge
is labeled with a letter from $\Sigma$ and edges leaving each node are labeled
with pairwise different letters, $q_{0}$ is a node of $G$ called initial, and
$Q$ is a nonempty set of the graph $G$ nodes called terminal. Note that the source $I$
can be interpreted as a partial deterministic finite automaton.

A path of the source $I$ is an arbitrary sequence $\xi =v_{1},d_{1},\ldots
,v_{m},d_{m},v_{m+1}$ of nodes and edges of $G$ such that  the edge $d_{i}$ leaves the node $v_{i}$ and enters the node $v_{i+1}$ for $%
i=1,\ldots ,m$. We now define a word $w(\xi )$ from $\Sigma^{\ast }$ in the
following way: if $m=0$, then $w(\xi )=\lambda $. Let $m>0$ and let $\delta
_{j}$ be the letter attached to the edge $d_{j}$, $j=1,\ldots ,m$. Then $%
w(\xi )=\delta _{1}\cdots \delta _{m}$. We say that the path $\xi $
generates the word $w(\xi )$. Note that different paths which start in the
same node generate different words.

We denote by $\Xi (I)$ the set of all paths of the source $I$ each of which
starts in the node $q_{0}$ and finishes in a node from $Q$. Let
$$L_{I}=\{w(\xi ):\xi \in \Xi (I)\}.$$
We say that the source $I$ generates the language $L_{I}$. It is well known
that $L_{I}$ is a regular language.

The source $I$ is called everywhere defined over the alphabet $\Sigma$ if exactly
$|\Sigma|$ edges leave each node of $G$. Note that these edges are labeled with
pairwise different letters from $\Sigma$. The source $I$ is called reduced if,
for each node of $G$, there exists a path from $\Xi (I)$, which contains
this node. It is known \cite{Markov82} that, for each regular language over the alphabet
$\Sigma$, there exists an everywhere defined over the alphabet $\Sigma$ source, which
generates this language. Therefore, for each nonempty regular language,
there exists a reduced source, which generates this language.

Let $L$ be a regular factorial language and  $I=(G,q_{0},Q)$ be a reduced source that generates the language $L$. Since the language $L$ is factorial, we can assume additionally that each node of the graph $G$ is terminal -- it will not change the language generated by $I$. The source $I$
will be called t-reduced if it is reduced and each node of the graph $G$ is
terminal.
Further we will assume that a considered regular factorial language $L$ is
nonempty and it is given by a t-reduced source, which generates this
language.

\subsection{Decision Trees for Recognition and Membership Problems}

\label{sect2.2}

Let $L$ be a regular factorial language over the alphabet $\Sigma$. For any natural $n$, denote $L(n)=L\cap \Sigma^{n}$, where $\Sigma^{n}$ is
the set of words over the alphabet $\Sigma$, which length is equal to $n$. We consider
two problems related to the set $L(n)$. The problem of recognition: for a
given word from $L(n)$, we should recognize it using attributes (queries) $%
l_{1}^{n},\ldots ,l_{n}^{n}$, where $l_{i}^{n}$, $i\in \{1,\ldots ,n\}$, is
a function from $\Sigma^{n}$ to $\Sigma$ such that $l_{i}^{n}(a_{1}\cdots
a_{n})=a_{i}$ for any word $a_{1}\cdots a_{n}\in \Sigma^{n}$. The problem
of membership: for a given word from $\Sigma^{n}$, we should recognize if
this word belongs to the set $L(n)$ using the same attributes. To solve
these problems, we use decision trees over $L(n)$.

A decision tree over $L(n)$ is a marked finite directed tree with root,
which has the following properties:

\begin{itemize}
\item The root and the edges leaving the root are not labeled.

\item Each node, which is not the root nor terminal node, is labeled with an
attribute from the set $\{l_{1}^{n},\ldots ,l_{n}^{n}\}$.

\item Each edge leaving a node, which is not a root, is labeled with a
letter from the alphabet $\Sigma$.
\end{itemize}

A decision tree over $L(n)$ is called deterministic if it satisfies the
following conditions:

\begin{itemize}
\item Exactly one edge leaves the root.

\item For any node, which is not the root nor terminal node, the edges
leaving this node are labeled with pairwise different letters.
\end{itemize}

Let $\Gamma $ be a decision tree over $L(n)$. A complete path in $%
\Gamma $ is any sequence $\xi =v_{0},e_{0},\ldots ,v_{m},e_{m},v_{m+1}$ of
nodes and edges of $\Gamma $ such that $v_{0}$ is the root, $v_{m+1}$ is a
terminal node, and $v_{i}$ is the initial and $v_{i+1}$ is the terminal node
of the edge $e_{i}$ for $i=0,\ldots ,m$. We define a subset $\Sigma(n,\xi )$ of
the set $\Sigma^{n}$ in the following way: if $m=0$, then $\Sigma(n,\xi
)=\Sigma^{n}$. Let $m>0$, the attribute $l_{i_{j}}^{n}$ be attached to the
node $v_{j}$, and $b_{j}$ be the letter attached to the edge $e_{j}$, $%
j=1,\ldots ,m$. Then
\[
\Sigma (n,\xi )=\{a_{1}\cdots a_{n}\in \Sigma^{n}:a_{i_{1}}=b_{1},\ldots
,a_{i_{m}}=b_{m}\}.
\]

Let $L(n)\neq \emptyset $. We say that a decision tree $\Gamma $ over $L(n)$
solves the problem of recognition for $L(n)$ nondeterministically if $\Gamma
$ satisfies the following conditions:

\begin{itemize}
\item Each terminal node of $\Gamma $ is labeled with a word from $L(n)$.

\item For any word $w\in L(n)$, there exists a complete path $\xi $ in the
tree $\Gamma $ such that $w\in \Sigma (n,\xi )$.

\item For any word $w\in L(n)$ and for any complete path $\xi $ in the tree $%
\Gamma $ such that $w\in \Sigma (n,\xi )$, the terminal node of the path $\xi $ is
labeled with the word $w$.
\end{itemize}

We say that a decision tree $\Gamma $ over $L(n)$ solves the problem of
recognition for $L(n)$ deterministically if $\Gamma $ is a deterministic
decision tree, which solves the problem of recognition for $L(n)$
nondeterministically.

We say that a decision tree $\Gamma $ over $L(n)$ solves the problem of
membership for $L(n)$ nondeterministically if $\Gamma $ satisfies the
following conditions:

\begin{itemize}
\item Each terminal node of $\Gamma $ is labeled with a number from the set $\{0,1\}$.

\item For any word $w\in \Sigma^{n}$, there exists a complete path $\xi $
in the tree $\Gamma $ such that $w\in \Sigma (n,\xi )$.

\item For any word $w\in\Sigma^{n}$ and for any complete path $\xi $ in
the tree $\Gamma $ such that $w\in \Sigma (n,\xi )$, the terminal node of the path
$\xi $ is labeled with the number $1$ if $w\in L(n)$ and with the number $0$%
, otherwise.
\end{itemize}

We say that a decision tree $\Gamma $ over $L(n)$ solves the problem of
membership for $L(n)$ deterministically if $\Gamma $ is a deterministic
decision tree which solves the problem of membership for $L(n)$
nondeterministically.

Let $\Gamma $ be a decision tree over $L(n)$. We denote by $h(\Gamma )$ the
maximum number of nodes in a complete path in $\Gamma $ that are not the
root nor terminal node. The value $h(\Gamma )$ is called the depth of the
decision tree $\Gamma $.

We denote by $h_{L}^{ra}(n)$ ($h_{L}^{rd}(n)$) the minimum depth of a
decision tree over $L(n)$, which solves the problem of recognition for $L(n)$
nondeterministically (deterministically). If $L(n)=\emptyset $, then $%
h_{L}^{ra}(n)=h_{L}^{rd}(n)=0$.

We denote by $h_{L}^{ma}(n)$ ($h_{L}^{md}(n)$) the minimum depth of a
decision tree over $L(n)$, which solves the problem of membership for $L(n)$
nondeterministically (deterministically). If $L(n)=\emptyset $, then $%
h_{L}^{ma}(n)=h_{L}^{md}(n)=0$.

\section{Bounds on Decision Tree Depth}
\label{sect3}

Let $L$ be a nonempty factorial regular language. In this section, we
consider the behavior of four functions $H_{L}^{ra}$, $H_{L}^{rd}$, $%
H_{L}^{ma}$, and $H_{L}^{md}$ defined on the set $\omega \setminus \{0\}$
and with values from $\omega $. For any natural $n$,%
\begin{eqnarray*}
H_{L}^{ra}(n) &=&\max \{h_{L}^{ra}(m):1\leq m\leq n\}, \\
H_{L}^{rd}(n) &=&\max \{h_{L}^{rd}(m):1\leq m\leq n\}, \\
H_{L}^{ma}(n) &=&\max \{h_{L}^{ma}(m):1\leq m\leq n\}, \\
H_{L}^{md}(n) &=&\max \{h_{L}^{md}(m):1\leq m\leq n\}.
\end{eqnarray*}%
For any pair $bc\in \{ra,rd,ma,md\}$, the function $%
H_{L}^{bc}(n)$ is a smoothed analog of the function $h_{L}^{bc}(n)$.

\subsection{Decision Trees Solving Recognition Problem Deterministically}
\label{sect3.1}

Let $I=(G,q_{0},Q)$ be a t-reduced source over the alphabet $\Sigma$.  A path of the source $I$ is  called a cycle of the source $I$
if there is at least one edge in this path, and the first node of this path
is equal to the last node of this path. A cycle of the source $I$ is called
elementary if nodes of this cycle, with the exception of the last node, are
pairwise different.

The source $I$ is called simple if every two different
elementary cycles of the source $I$ do not have common nodes. Let $I$ be a
simple source and $\xi $ be a path of the source $I$. The number of
different elementary cycles of the source $I$, which have
common nodes with $\xi $, is denoted by $cl(\xi )$ and is called the cyclic
length of the path $\xi $. The value
$$
cl(I)=\max \{cl(\xi ):\xi \in \Xi (I)\}
$$
is called the cyclic length of the source $I$.

Let $I$ be a simple source, $C$ be an elementary cycle of the source $I$,
and $v$ be a node of the cycle $C$. Beginning with the node $v$, the cycle $C$
generates an infinite periodic word over the alphabet $\Sigma$. This word will be
denoted by $W(I,C,v)$. We denote by $r(I,C,v)$ the minimum period of the
word $W(I,C,v)$. The source $I$ is called dependent if there exist two
different elementary cycles $C_{1}$ and $C_{2}$ of the source
$I$, nodes $v_{1}$ and $v_{2}$ of the cycles $C_{1}$ and $C_{2}$,
respectively, and a path $\pi $ of the source $I$ from $v_{1}$ to $v_{2}$,
which satisfy the following conditions: $W(I,C_{1},v_{1})=W(I,C_{2},v_{2})$
and the length of the path $\pi $ is a number divisible by $r(I,C_{1},v_{1})$%
. If the source $I$ is not dependent, then it is called independent.
Next theorem follows immediately from Theorem 2.1  \cite{Moshkov00}.

\begin{theorem}
\label{T1} Let $L$ be a nonempty regular factorial language over the alphabet $\Sigma$ and $I$ be a
t-redu\-ced source, which generates the language $L$. Then the following
statements hold:

(a) If $I$ is an independent simple source and $cl(I)\leq 1$, then $%
H_{L}^{rd}(n)=O(1)$.

(b) If $I$ is an independent simple source and $cl(I)\geq 2$, then $%
H_{L}^{rd}(n)=\Theta (\log n)$.

(c) If $I$ is not independent simple source, then $H_{L}^{rd}(n)=\Theta (n)$.
\end{theorem}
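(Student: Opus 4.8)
The statement is declared to follow from Theorem 2.1 of the author's earlier paper \cite{Moshkov00}, which handles arbitrary regular languages. So the plan is not to reprove the trichotomy from scratch but to show that the hypotheses of Theorem \ref{T1} are exactly the specializations of the hypotheses of that earlier theorem to the factorial case. First I would recall the general statement: for a reduced source $I$ generating a regular language, the minimum depth of deterministic recognition decision trees is $O(1)$, $\Theta(\log n)$, or $\Theta(n)$ according to whether $I$ is an independent simple source with $cl(I)\le 1$, an independent simple source with $cl(I)\ge 2$, or not an independent simple source. Then the whole content of the proof is: a t-reduced source is in particular a reduced source, and the notions "simple", "independent", and "cyclic length" depend only on the graph $G$, the initial node $q_0$, and the path set $\Xi(I)$ — none of which changes when we declare all nodes terminal. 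Hence the three cases (a), (b), (c) transcribe verbatim.

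The one genuine point to check is that $\Xi(I)$ is the same object in both settings — i.e. that making every node terminal does not enlarge the set of "accepting" paths in a way that changes $cl(\xi)$ or the dependence condition. But since $I$ is already assumed reduced, every node lies on some path from $\Xi(I)$ in the original source; after t-reduction the accepting paths are all paths starting at $q_0$, which is a superset, yet the language is unchanged (this is exactly the observation recorded in Section \ref{sect2.1} that motivates t-reduction). I would note that the definitions of elementary cycle, $cl(\xi)$, $W(I,C,v)$, $r(I,C,v)$, and "dependent/independent" were all phrased purely in terms of cycles of $I$ and paths of $I$ between cycle nodes, and a cycle or a path between two nodes exists in the t-reduced source iff it exists in the reduced source (same underlying graph). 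Therefore $I$ is a simple/independent source with a given cyclic length as a t-reduced source precisely when it is so as a reduced source, and we may apply the general theorem.

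The remaining bookkeeping is to translate the general theorem's hypothesis "$I$ is reduced" into "$I$ is t-reduced": this is immediate because t-reduced implies reduced, and the standing assumption of the paper is that $L$ is given by a t-reduced source. So parts (a), (b), (c) follow by direct substitution, with the caveat in (c) that "not an independent simple source" is the negation of the conjunction "independent $\wedge$ simple", i.e. either $I$ is not simple or $I$ is simple but dependent — matching the case split of \cite{Moshkov00} exactly.

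I do not expect a serious obstacle here; the only thing to be careful about is making sure that the cyclic-length and dependence conditions really are invariant under t-reduction, which comes down to the elementary observation that these are properties of $G$, $q_0$, and reachability, all preserved. If \cite{Moshkov00} were stated for everywhere-defined sources rather than reduced ones, there would be a small extra step passing between the two, but since reduced sources suffice there, nothing more is needed.
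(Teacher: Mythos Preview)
Your proposal is correct and matches the paper's approach: the paper gives no separate proof of Theorem~\ref{T1} at all, stating only that it ``follows immediately from Theorem 2.1 \cite{Moshkov00}.'' Your observation that a t-reduced source is by definition reduced, so the general theorem applies directly, is exactly the point; the extra care you take about whether $\Xi(I)$, $cl(I)$, and the dependence condition are invariant under enlarging the terminal set is not actually needed here (the source $I$ in the hypothesis is already t-reduced, not obtained by t-reducing something else), but it does no harm.
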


\subsection{Decision Trees Solving Recognition Problem Nondeterministically}

\label{sect3.2}

Let $L$ be a nonempty regular factorial language over the alphabet $\Sigma$. For
any natural $n$, we define a parameter $T_{L}(n)$ of the language $L$. If $%
L(n)=\emptyset $, then $T_{L}(n)=0$. Let $L(n)\neq \emptyset $,  $%
w=a_{1}\cdots a_{n}\in L(n)$, and $J\subseteq \{1,\ldots ,n\}$.  Denote $%
L(w,J)=\{b_{1}\cdots b_{n}\in L(n):b_{j}=a_{j},j\in J\}$ (if $J=\emptyset $,
then $L(w,J)=L(n)$) and $M_{L}(n,w)=\min \{|J|:J\subseteq \{1,\ldots
,n\},|L(w,J)|=1\}$. Then $$T_{L}(n)=\max \{M_{L}(n,w):w\in L(n)\}.$$ Note
that, for any word $w\in L(n)$, $M_{L}(n,w)$ is the minimum number of
letters of the word $w$, which allow us to distinguish it from all
other words belonging to $L(n)$. One can show that $h_{L}^{ra}(n)=T_{L}(n)$.

\begin{theorem}
\label{T2} Let $L$ be a nonempty regular factorial language over the alphabet $\Sigma$ and $%
I=(G,q_{0},Q)$ be a t-reduced source, which generates the language $L$. Then
the following statements hold:

(a) If $I$ is an independent simple source, then $%
H_{L}^{ra}(n)=O(1)$.

(b) If $I$ is not independent simple source, then $H_{L}^{ra}(n)=\Theta (n)$.
\end{theorem}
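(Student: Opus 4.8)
The plan is to prove both parts by relating $H_L^{ra}$ to $H_L^{rd}$ via Theorem \ref{T1}, using the identity $h_L^{ra}(n)=T_L(n)$ and the obvious inequality $h_L^{ra}(n)\le h_L^{rd}(n)$ (a deterministic tree is in particular a nondeterministic one), hence $H_L^{ra}(n)\le H_L^{rd}(n)$ for all $n$.

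For part (a), suppose $I$ is an independent simple source. If moreover $cl(I)\le 1$, then Theorem \ref{T1}(a) gives $H_L^{rd}(n)=O(1)$, and so $H_L^{ra}(n)=O(1)$ at once. The work is the remaining subcase $cl(I)\ge 2$, where Theorem \ref{T1}(b) only gives $H_L^{rd}(n)=\Theta(\log n)$, which is not good enough. Here I would argue directly that for an independent simple source $M_L(n,w)$ stays bounded. Fix a word $w=a_1\cdots a_n\in L(n)$ and the accepting path $\xi$ of $I$ generating $w$; since $I$ is simple, $\xi$ decomposes into a bounded number (at most $cl(I)+1\le C_0$, where $C_0$ depends only on $I$) of ``segments'' that traverse elementary cycles, linked by short acyclic connectors whose total length is at most the number of nodes of $G$. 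The key claim is: revealing a bounded set $J$ of positions of $w$ — namely, for each elementary cycle $C$ used by $\xi$, the positions corresponding to one full period $r(I,C,v)$ of the periodic word it generates, together with the $O(1)$ positions lying on the acyclic connectors and the positions marking where $\xi$ enters and leaves each cycle — already forces $|L(w,J)|=1$. The point is that any competing word $w'\in L(w,J)$ must be generated by an accepting path $\xi'$ of $I$; on each cyclic segment, $w'$ agrees with $w$ on a full period, and because $I$ is \emph{independent}, $\xi'$ cannot ``shift its phase'' relative to $\xi$ while still matching the revealed periodic block — any such shift would exhibit exactly the configuration (two elementary cycles generating the same infinite word, joined by a path of length divisible by the period) that the definition of dependence forbids. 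So $\xi'$ must pass through the same cycles with the same phases and the same connectors, forcing $\xi'=\xi$ and $w'=w$. This bounds $|J|$ by a constant depending only on $I$, hence $M_L(n,w)\le C$, hence $T_L(n)\le C$, hence $h_L^{ra}(n)\le C$ and $H_L^{ra}(n)=O(1)$.

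For part (b), suppose $I$ is not an independent simple source. By Theorem \ref{T1}(c) we already know $H_L^{rd}(n)=\Theta(n)$, but since $h_L^{ra}$ can be much smaller than $h_L^{rd}$, the upper bound $H_L^{ra}(n)=O(n)$ is trivial (at most $n$ queries suffice) while the lower bound $H_L^{ra}(n)=\Omega(n)$ needs a separate argument. I would split into the two ways $I$ can fail the hypothesis: (i) $I$ is not simple, i.e. two distinct elementary cycles share a node; (ii) $I$ is simple but dependent. In case (i), the shared node yields, for suitable $n$, a large family of words in $L(n)$ that pairwise differ in many positions — concretely, one can interleave traversals of the two cycles in exponentially many ways, and a counting/adversary argument shows that to isolate a typical such word one must query a constant fraction of its positions, giving $T_L(n)=\Omega(n)$. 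In case (ii), take the witnessing cycles $C_1,C_2$, nodes $v_1,v_2$, and connecting path $\pi$ of length divisible by $r(I,C_1,v_1)$ with $W(I,C_1,v_1)=W(I,C_2,v_2)$: going around $C_1$ some number of times, then along $\pi$, then around $C_2$, produces, for appropriate lengths $n$, many distinct words of length $n$ that all share long common blocks but differ in where the ``switch'' from the $C_1$-phase to the $C_2$-phase occurs; locating that switch point again costs $\Omega(n)$ queries in the worst case. In both cases we then invoke the standard fact (cf. the reduction to the hitting-set quantity $M_L(n,w)$) that a nondeterministic recognition tree of depth $d$ certifies each word by $\le d$ letters, so $d\ge T_L(n)=\Omega(n)$.

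The main obstacle I anticipate is the upper-bound direction of part (a) in the subcase $cl(I)\ge 2$: one must show cleanly that \emph{independence} is exactly the obstruction to a ``phase-shifted impostor'' word, and that a single revealed period per cycle (plus the $O(1)$ connector data) genuinely pins the path down. Getting the bookkeeping right — which positions to put in $J$, and verifying that every alternative accepting path forced to agree with $w$ on those positions must coincide with the original path — is where the definitions of simple, elementary cycle, $r(I,C,v)$, and dependent all have to be used together, and it is the crux of the proof. The lower bounds in part (b) are more routine, being adversary arguments of a familiar flavor, though case (ii) requires care in choosing the lengths $n$ for which the construction produces enough words.
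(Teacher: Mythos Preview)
Your overall architecture matches the paper's: the same case split, the same reduction to $T_L(n)$, the same use of $H_L^{ra}\le H_L^{rd}$ for the easy subcase of (a). Two comments are worth making.

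\medskip

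\textbf{Part (a), subcase $cl(I)\ge 2$.} The paper does not re-prove the bounded-certificate claim; it simply invokes the inequality $M_L(n,w)\le d(4d+1)$ (with $d$ the number of nodes of $G$) established in the proof of Lemma~4.5 of \cite{Moshkov00}. Your sketch is an attempt to re-derive that lemma, and the outline is reasonable, but as you yourself note it is not a proof: one must also rule out competitors $w'$ whose accepting path $\xi'$ visits a \emph{different} collection of elementary cycles, not merely the same cycles with a phase shift, and the bookkeeping for the transition positions is delicate. Citing the existing lemma, as the paper does, is the clean way out.

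\medskip

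\textbf{Part (b).} Here the paper's argument is much more concrete than yours, and your case (ii) has a real gap. In case (i) the paper does not use any counting over ``exponentially many interleavings''; it fixes a common node $v$ of two elementary cycles, builds two distinct words $\beta,\gamma$ of the same length $bc$ (each obtained by looping around one cycle a suitable number of times), and observes that to distinguish $\alpha\gamma^i$ from each of the $i$ words $\alpha\gamma^j\beta\gamma^{i-j-1}$ one must query at least one position in every $\gamma$-block. This immediately gives $M_L(n_i,\alpha\gamma^i)\ge i$. Your ``typical word / counting'' route is more work and does not obviously yield a specific $w$ with large $M_L(n,w)$, which is what $T_L(n)$ requires.

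In case (ii) your intuition that ``locating the switch point costs $\Omega(n)$'' is on target, but you skip the crucial step: since $W(I,C_1,v_1)=W(I,C_2,v_2)$, looping in $C_1$ and looping in $C_2$ produce \emph{the same} letters, so the switch is invisible in the cyclic portions. The difference has to come from the connector $\pi$. The paper sets $u=\gamma^{bc}$ (pure periodic block) and $w=\beta\gamma^{c(b-1)}$ (one traversal of $\pi$ followed by periodic filling), and one must check that $u\neq w$, i.e.\ that $\beta\neq\gamma^{c}$. This holds because the source is a partial deterministic automaton: two paths from $v_1$ generating the same word must coincide, so $\beta=\gamma^{c}$ would force $\pi$ to stay on $C_1$ and end on $C_1$, contradicting the fact that $v_2$ lies on $C_2$, which is disjoint from $C_1$ by simplicity. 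Without this observation your lower bound collapses, since the ``many distinct words'' you allude to could all be equal. Once $u\neq w$ is established, the same block argument as in case (i) applied to $\alpha u^i$ versus $\alpha u^jwu^{i-j-1}$ gives $M_L(n_i,\alpha u^i)\ge i$.
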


\begin{proof}
(a) Let $I$ be an independent simple source and $cl(I)\leq 1$. By Theorem %
\ref{T1}, $H_{L}^{rd}(n)=O(1)$. It is clear that $H_{L}^{ra}(n)\leq
H_{L}^{rd}(n)$. Therefore $H_{L}^{ra}(n)=O(1)$.

Let $I$ be an independent simple source and $cl(I)\geq 2$. Let $n$ be a
natural number. If $L(n)=\emptyset $, then $T_{L}(n)=0$. Let $L(n)\neq
\emptyset $. Denote by $d$ the number of nodes in the graph $G$. In the
proof of Lemma 4.5 \cite{Moshkov00}, it was proved that $M_{L}(n,w)\leq d(4d+1)$ for any word
$w\in L(n)$. Therefore $T_{L}(n)\leq d(4d+1)$. Thus, $h_{L}^{ra}(n)\leq
d(4d+1)$ for any natural $n$ and $H_{L}^{ra}(n)=O(1)$.

(b) Let $I$ be not simple source and $C_{1},C_{2}$ be different elementary cycles of the source $I$, which have a common node $v$%
. Since $I$ is a t-reduced source, it contains a path $\xi $ from the node $%
q_{0}$ to the node $v$, and $v$ is a terminal node. Let the length of the path $\xi $ be
equal to $a$, the length of the cycle $C_{1}$ be equal to $b$, and the length
of the cycle $C_{2}$ be equal to $c$. Let $\alpha $ be the word generated by
the  path $\xi $, $\beta $ be the word generated by a path from $v$
to $v$ obtained by the passage $c$ times along the cycle $C_{1}$, and $%
\gamma $ be the word generated by a path from $v$ to $v$ obtained by the
passage $b$ times along the cycle $C_{2}$. The words $\beta $ and $\gamma $
are different and they have the same length $bc$.

Consider the sequence of
numbers $n_{i}=a+ibc$, $i=1,2,\ldots $. Let $i \in \omega \setminus \{0\}$. The set $L(n_{i})
$ contains the word $\alpha \gamma ^{i}$ and the words $\alpha \gamma
^{j}\beta \gamma ^{i-j-1}$ for $j=0,\ldots ,i-1$. It is easy to show that $%
M_{L}(n,\alpha \gamma ^{i})\geq i$: to distinguish the word $\alpha \gamma
^{i}$ from the words $\alpha \gamma ^{j}\beta \gamma ^{i-j-1}$, $j=0,\ldots
,i-1$, we need to use at least one letter from each of $i$ words $\gamma $
appearing in $\alpha \gamma ^{i}$. Therefore $T_{L}(n_{i})\geq i$ and $%
h_{L}^{ra}(n_{i})\geq i=(n_{i}-a)/(bc)$. Let $n\geq n_{1}$ and let $i$ be
the maximum natural number such that $n\geq n_{i}$. Evidently, $%
n-n_{i}\leq bc$. Hence $H_{L}^{ra}(n)\geq h_{L}^{ra}(n_{i})\geq (n-bc-a)/(bc)
$. Therefore $H_{L}^{ra}(n)\geq n/(2bc)$ for large enough $n$. The
inequality $H_{L}^{ra}(n)\leq n$ is obvious. Thus, $H_{L}^{ra}(n)=\Theta (n)$%
.

Let $I$ be a dependent simple source. Then there exist two different
elementary cycles $C_{1}$ and $C_{2}$ of the source $I$, nodes $v_{1}$ and $%
v_{2}$ of the cycles $C_{1}$ and $C_{2}$, respectively, and a path $\pi $
of the source $I$ from $v_{1}$ to $v_{2}$, which satisfy the following
conditions: $W(I,C_{1},v_{1})=W(I,C_{2},v_{2})$ and the length of the path $%
\pi $ is a number divisible by $r(I,C_{1},v_{1})$. Let us remind that, for $%
i=1,2$, $W(I,C_{i},v_{i})$ is the infinite periodic word over the alphabet $\Sigma
$ generated by the cycle $C_{i}$ beginning with the node $v_{i}$, and $%
r(I,C_{1},v_{1})$ is the minimum period of the word $W(I,C_{1},v_{1})$.
Since $I$ is a t-reduced source, it contains a path $\xi $ from the node $%
q_{0}$ to the node $v_{1}$, and all nodes of the graph $G$ are terminal. Let
the path $\xi $ generate the word $\alpha $ of the length $a$. Denote $r=$ $%
r(I,C_{1},v_{1})$. Let the length of the cycle $C_{1}$ be equal to $br$, the
length of the path $\pi $ be equal to $cr$, and the path $\pi $ generate the
word $\beta $. Denote by $\gamma $ the prefix of the length $r$ of the word $%
W(I,C_{1},v_{1})$. We now define two words of the length $rbc$: $u=\gamma
^{bc}$ and $w=\beta \gamma ^{c(b-1)}$. It is clear that $u\neq w$.

Consider
the sequence of numbers $n_{i}=a+irbc$, $i=1,2,\ldots $. Let $i\in \omega
\setminus \{0\}$. The set $L(n_{i})$ contains the word $\alpha u^{i}$ and
the words $\alpha u^{j}wu^{i-j-1}$ for $j=0,\ldots ,i-1$. It is easy to show
that $M_{L}(n,\alpha u^{i})\geq i$: to distinguish the word $\alpha u^{i}$
from the words $\alpha u^{j}wu^{i-j-1}$, $j=0,\ldots ,i-1$, we need to use
at least one letter from each of $i$ words $u$ appearing in $\alpha u^{i}$.
Therefore $T_{L}(n_{i})\geq i$ and $h_{L}^{ra}(n_{i})\geq i=(n_{i}-a)/(rbc)$%
. Let $n\geq n_{1}$ and let $i$ be the maximum natural number  such
that $n\geq n_{i}$. Evidently, $n-n_{i}\leq rbc$. Hence $H_{L}^{ra}(n)\geq
h_{L}^{ra}(n_{i})\geq (n-rbc-a)/(rbc)$. Therefore $H_{L}^{ra}(n)\geq n/(2rbc)
$ for large enough $n$. The inequality $H_{L}^{ra}(n)\leq n$ is obvious.
Thus, $H_{L}^{ra}(n)=\Theta (n)$. 
\end{proof}

Note that in general case (when we consider not only factorial languages)
the classification of reduced sources depending on the minimum depth of decision trees solving the problem of recognition nondeterministically is more complicated \cite{Moshkov97}. In
particular, there exists a dependent simple reduced source $I_0$ (see Fig. \ref{Fig0}) with the initial node labeled with the symbol $+$ and the unique terminal node labeled with the symbol $*$ that
generates the regular language $L_0=\{0^i10^j:i,j\in \omega\}$ over the alphabet $\{0,1\}$, which is not factorial and for which $H_{L_0}^{ra}(n)=O(1)$.

\begin{figure}[th]
\centering
\includegraphics[width=0.40\textwidth]{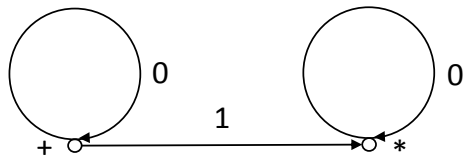}
\caption{Source $I_{0}$}
\label{Fig0}
\end{figure}

\subsection{Decision Trees Solving Membership Problem}

\label{sect3.3}

For a regular factorial language $L$ over the alphabet $\Sigma$, we denote by $L^{C}$ its complementary
language $\Sigma^{\ast }\setminus L$. The notation $|L|=\infty $ means that $L$
is an infinite language, and the notation $|L|<\infty $ means that $L$ is a
finite language.

\begin{theorem}
\label{T3} Let $L$ be a regular factorial language over the alphabet $\Sigma$.

(a) If $|L|=\infty $ and $L^{C}\neq \emptyset $, then $H_{L}^{md}(n)=\Theta
(n)$ and $H_{L}^{ma}(n)=\Theta (n)$.

(b) If $|L|<\infty $ or $L^{C}=\emptyset $, then $H_{L}^{md}(n)=O(1)$ and $%
H_{L}^{ma}(n)=O(1)$.
\end{theorem}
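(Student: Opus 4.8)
The plan is to split into the two cases and, in each, establish matching upper and lower bounds for both $H_L^{md}$ and $H_L^{ma}$ simultaneously, using the obvious inequalities $h_L^{ma}(n)\le h_L^{md}(n)\le n$ throughout. For part (b), suppose first $L^C=\emptyset$, i.e. $L=\Sigma^\ast$. Then $L(n)=\Sigma^n$ for every $n$, so every word over $\Sigma$ of length $n$ belongs to $L(n)$; the trivial decision tree with root, one outgoing edge, and a terminal node labeled $1$ solves the membership problem deterministically with depth $0$. Hence $h_L^{md}(n)=h_L^{ma}(n)=0$ for all $n$, giving $H_L^{md}(n)=H_L^{ma}(n)=O(1)$. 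Now suppose $|L|<\infty$. Then there is $N$ such that $L(n)=\emptyset$ for all $n>N$; for such $n$ the trivial tree labeled $0$ solves membership with depth $0$, so $h_L^{md}(n)=h_L^{ma}(n)=0$. For $n\le N$ we have $h_L^{md}(n)\le n\le N$, so both smoothed functions are bounded by the constant $N$, i.e. $O(1)$.

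For part (a), the upper bound $H_L^{md}(n)\le n$ and $H_L^{ma}(n)\le H_L^{md}(n)$ are immediate, so the work is the linear lower bound, and since $h_L^{ma}\le h_L^{md}$ it suffices to lower-bound $h_L^{ma}$. The idea is to exploit both hypotheses: $|L|=\infty$ together with $L$ being factorial forces, via the t-reduced source $I=(G,q_0,Q)$ with $d$ nodes, the existence of an elementary cycle $C$ reachable from $q_0$ (a path of length exceeding $d$ must revisit a node); let $v$ be a node of $C$, let $\xi$ be a path from $q_0$ to $v$ generating a word $\alpha$ of length $a$, let the cycle have length $\ell$ and generate (from $v$) the word $\delta$ of length $\ell$. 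Since every node is terminal, $\alpha\delta^k\in L$ for all $k\ge 0$. On the other hand $L^C\neq\emptyset$ means there is a word $z\notin L$; since $L$ is factorial, $z$ has a proper factor $y$ that is a minimal forbidden word, and every length-$|y|$ word inside any longer word we build must avoid being $y$ (or else the whole word is outside $L$). The cleanest route is: pick a minimal forbidden word $f$ of length $t$ (it exists because $L\neq\Sigma^\ast$ and $L$ is factorial, hence $MF(L)\neq\emptyset$). For each length $n$ of the form $n=a+k\ell$ with $k$ large, consider the word $W=\alpha\delta^k\in L(n)$ and, for each of the $k$ occurrences of $\delta$, a "spoiled" variant obtained by overwriting a short block of that occurrence so as to create the forbidden factor $f$, producing a word $W'\notin L(n)$ that differs from $W$ only inside that one block. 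A membership decision tree of depth $h$ must, on input $W$, query enough positions to certify $W\in L(n)$, hence distinguish $W$ from each of the $k$ spoiled variants; since the spoiled regions are disjoint (one per copy of $\delta$) and a query outside the $j$-th block cannot separate $W$ from the $j$-th spoiled variant, the tree must query at least one position in each of the $k$ blocks, so $h\ge k=(n-a)/\ell$. Smoothing over the $n$ not of this form costs only an additive $O(1)$, yielding $H_L^{ma}(n)=\Omega(n)$ and hence $H_L^{md}(n)=\Omega(n)$ as well.

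The main obstacle is making the "spoiling" construction actually produce words of the correct length that lie outside $L(n)$ while agreeing with $W$ outside a single short block, and doing so uniformly for all large $n$. One must check that inserting the forbidden factor $f$ into a copy of $\delta$ does not require changing the length, which may force using a cycle whose length is a multiple of (or at least commensurable with) $|f|$, or instead overwriting a window of length exactly $|f|$ spanning possibly two consecutive copies of $\delta$ and part of $\alpha$ — care is needed at the boundary copies. An alternative, cleaner handling: choose $k$ copies of $\delta$ with one extra copy of $\delta$ as "padding room," and in the $j$-th copy replace a length-$t$ window by $f$, absorbing any length mismatch into a fixed prefix/suffix adjustment independent of $j$; then verify the disjointness of the $k$ modified windows (possible once $\ell$ is large enough relative to $t$, which can be arranged by passing to the cycle $C$ traversed $\lceil t/\ell\rceil$ times). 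Once the family $\{W; W'_1,\dots,W'_k\}$ with pairwise-disjoint "witness blocks" is in hand, the counting argument forcing depth $\ge k$ is routine and mirrors the arguments already used in the proof of Theorem \ref{T2}.
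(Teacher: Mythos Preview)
Your treatment of part (b) matches the paper's. For part (a) your argument is correct in outline but takes a more elaborate route than needed. You build a specific word $\alpha\delta^{k}\in L(n)$ via a reachable cycle in the source, then produce $k$ spoiled variants by overwriting disjoint windows with a forbidden factor; this forces you to worry about the relation between $|f|$ and the cycle length, to restrict to special values $n=a+k\ell$, and then to smooth. The paper bypasses all of this machinery: since $L$ is factorial and infinite, $L(n)\neq\emptyset$ for \emph{every} $n$ (any word of length $\ge n$ has a length-$n$ factor in $L$), so one may take an arbitrary $w\in L(n)$, partition its positions into $\lfloor n/t\rfloor$ consecutive blocks of length $t=|w_{0}|$ where $w_{0}$ is a shortest word in $L^{C}$, and observe that replacing any single block by $w_{0}$ yields a word outside $L$ (as $L$ is factorial). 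If a complete path $\xi$ with $w\in\Sigma(n,\xi)$ and terminal label $1$ queries no position in some block, the replaced word also lies in $\Sigma(n,\xi)$, contradicting correctness; hence $h_{L}^{ma}(n)\ge\lfloor n/t\rfloor$ for all $n>t$, with no cycle construction, no block-size alignment, and no smoothing. Your approach would go through once the bookkeeping is cleaned up (note that overwriting a length-$t$ window never changes the length, so the ``insertion'' worry you raise is a non-issue), but the paper's argument is shorter and never touches the source $I$ at all.
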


\begin{proof}
It is clear that $h_{L}^{ma}(n)\leq h_{L}^{md}(n)$ for any natural $n$.

(a) Let $|L|=\infty $, $L^{C}\neq \emptyset $, and $w_{0}$ be a word with
the minimum length from $L^{C}$. Denote by $t$ the length of $w_{0}$. Since $|L|=\infty $, $%
L(n)\neq \emptyset $ for any natural $n$. Let $n$ be a natural number such
that $n>t$ and $\Gamma $ be a decision tree over $L(n)$ that solves the
problem of membership for $L(n)$ nondeterministically and has the minimum
depth. Let $w\in L(n)$ and $\xi $ be a complete path in $\Gamma $ such that $%
w\in \Sigma (n,\xi )$. Then the terminal node of $\xi $ is labeled with the number
$1$. Beginning with the first letter, we divide the word $w$ into $%
\left\lfloor n/t\right\rfloor $ blocks with $t$ letters in each and the
suffix of the length $n-t\left\lfloor n/t\right\rfloor $. Let us assume that
the number of nodes labeled with attributes in $\xi $ is less than $%
\left\lfloor n/t\right\rfloor $. Then there is a block such that queries
(attributes) attached to nodes of $\xi $ does not ask about letters from
the block. We replace this block in the word $w$ with the word $w_{0}$ and
denote by $w^{\prime }$ the obtained word. It is clear that $w^{\prime
}\notin L$ and $w^{\prime }\in \Sigma (n,\xi )$, but this is impossible since the
terminal node of the path $\xi $ is labeled with the number 1. Therefore the
depth of $\Gamma $ is greater than or equal to $\left\lfloor n/t\right\rfloor $. Thus, $%
h_{L}^{ma}(n)\geq \left\lfloor n/t\right\rfloor $. It is easy to construct a
decision tree over $L(n)$ that solves the problem of membership for $L(n)$
deterministically and has the depth equals to $n$. Therefore $%
h_{L}^{md}(n)\leq n$. Thus, $H_{L}^{md}(n)=\Theta (n)$ and $%
H_{L}^{ma}(n)=\Theta (n)$.

(b) Let $|L|<\infty $. Then there exists natural $m$ such that $%
L(n)=\emptyset $ for any natural $n\geq m$. Therefore, for each natural $%
n\geq m$, $h_{L}^{md}(n)=0$ and $h_{L}^{ma}(n)=0$. Thus, $H_{L}^{md}(n)=O(1)$
and $H_{L}^{ma}(n)=O(1)$.

Let $L^{C}=\emptyset $, $n$ be a natural number, and $\Gamma $ be a decision
tree over $L(n)$, which consists of the root, a terminal node labeled with $%
1, $ and an edge that leaves the root and enters the terminal node. One can
show that $\Gamma $ solves the problem of membership for $L(n)$
deterministically and has the depth equals to $0$. Therefore $%
h_{L}^{md}(n)=0 $ and $h_{L}^{ma}(n)=0$. Thus, $H_{L}^{md}(n)=O(1)$ and $%
H_{L}^{ma}(n)=O(1)$. 
\end{proof}

\section{Corollaries}

\label{sect4}

In this section, we consider two corollaries of Theorems \ref{T1}--\ref{T3}.

\subsection{Joint Behavior of Functions $H_{L}^{ra}$, $H_{L}^{rd}$, $%
H_{L}^{ma}$, and $H_{L}^{md}$}

\label{sect4.1}

In this section, we assume that each regular factorial language over the alphabet $\Sigma$ is given by
a t-reduced source $I$, which generates the considered language denoted by $%
L_{I}$. To study all possible types of joint behavior of functions $%
H_{L_{I}}^{rd}$, $H_{L_{I}}^{ra}$, $H_{L_{I}}^{md}$, and $%
H_{L_{I}}^{ma}$, we consider five classes of regular factorial languages $%
\mathcal{F}_{1},\ldots ,\mathcal{F}_{5}$ described in the columns 2--4 of
Table \ref{tab1}. In particular, $\mathcal{F}_{1}$ consists of all regular
factorial languages $L_{I}$ for which the source $I$ is an independent
simple source and $cl(I)=0$. It is easy to show that the complexity classes $%
\mathcal{F}_{1},\ldots ,\mathcal{F}_{5}$ are pairwise disjoint, and each
regular factorial language $L_{I}$ belongs to one of these classes. The
behavior of functions $H_{L_{I}}^{rd}$, $H_{L_{I}}^{ra}$, $%
H_{L_{I}}^{md}$, and $H_{L_{I}}^{ma}$ for languages from these classes
is described in the last four columns of Table \ref{tab1}. For each class,
the results considered in Table \ref{tab1} for the functions $%
H_{L_{I}}^{rd}$ and $H_{L_{I}}^{ra}$ follow directly from Theorems \ref%
{T1} and \ref{T2}.

\begin{table}[h]
\caption{Complexity classes $\mathcal{F}_{1},\ldots ,\mathcal{F}_{5}$}
\label{tab1}\center
\begin{tabular}{|l|lll|llll|}
\hline
& $I$ is independent & $cl(I)$ & $L_{I}^{C}$ & $H_{L_{I}}^{rd}$ & $%
H_{L_{I}}^{ra}$ & $H_{L_{I}}^{md}$ & $H_{L_{I}}^{ma}$ \\
& simple source &  &  &  &  &  &  \\ \hline
$\mathcal{F}_{1}$ & Yes & $=0$ &  & $O(1)$ & $O(1)$ & $O(1)$ & $O(1)$ \\
$\mathcal{F}_{2}$ & Yes & $=1$ &  & $O(1)$ & $O(1)$ & $\Theta (n)$ & $\Theta
(n)$ \\
$\mathcal{F}_{3}$ & Yes & $\geq 2$ &  & $\Theta (\log n)$ & $O(1)$ & $\Theta
(n)$ & $\Theta (n)$ \\
$\mathcal{F}_{4}$ & No &  & $\neq \emptyset $ & $\Theta (n)$ & $\Theta (n)$
& $\Theta (n)$ & $\Theta (n)$ \\
$\mathcal{F}_{5}$ & No &  & $=\emptyset $ & $\Theta (n)$ & $\Theta (n)$ & $%
O(1)$ & $O(1)$ \\ \hline
\end{tabular}%
\end{table}

We now consider the behavior of the functions $%
H_{L_{I}}^{md}$ and $H_{L_{I}}^{ma}$ for each of the classes $\mathcal{%
F}_{1},\ldots ,\mathcal{F}_{5}$.
Let $I=(G,q_{0},Q)$ be a t-reduced source over the alphabet $\Sigma$, which generates a regular  factorial language.

Let $L_{I}\in \mathcal{F}_{1}$. Since $cl(I)=0$, $G$ is a directed acyclic
graph, and the language $L_{I}$ is finite. Using Theorem \ref{T3} we obtain $%
H_{L_{I}}^{md}(n)=O(1)$ and $H_{L_{I}}^{ma}(n)=O(1)$.

Let $L_{I}\in \mathcal{F}_{2}$. Since $cl(I)=1$, $G$ is a graph containing a
cycle, and the language $L_{I}$ is infinite. By Lemma 4.2 \cite{Moshkov00}, $%
|L_{I}(n)|=O(1)$. Therefore $L_{I}^{C}\neq \emptyset $. Using Theorem \ref{T3}
we obtain $H_{L_{I}}^{md}(n)=\Theta (n)$ and $H_{L_{I}}^{ma}(n)=\Theta (n)$.

Let $L_{I}\in \mathcal{F}_{3}$. Since $cl(I)\geq 2$, $G$ is a graph
containing a cycle, and the language $L_{I}$ is infinite. By Lemma 4.2 \cite{Moshkov00},
$|L_{I}(n)|=O(n^{cl(I)})$. Therefore $L_{I}^{C}\neq \emptyset $. Using Theorem %
\ref{T3} we obtain $H_{L_{I}}^{md}(n)=\Theta (n)$ and $H_{L_{I}}^{ma}(n)=%
\Theta (n)$.

Let $L_{I}\in \mathcal{F}_{4}$. Since $I$ is not an independent simple
source, $G$ is a graph containing a cycle, and the language $L_{I}$ is
infinite. We know that $L_{I}^{C}\neq \emptyset $. Using Theorem \ref{T3} we
obtain $H_{L_{I}}^{md}(n)=\Theta (n)$ and $H_{L_{I}}^{ma}(n)=\Theta (n)$.

Let $L_{I}\in \mathcal{F}_{5}$. Then $L_{I}^{C}=\emptyset $. Using
Theorem \ref{T3} we obtain $H_{L_{I}}^{md}(n)=O(1)$ and $%
H_{L_{I}}^{ma}(n)=O(1)$.

We now show that the classes $\mathcal{F}_{1},\ldots ,\mathcal{F}_{5}$ are nonempty. For simplicity, we assume that $\Sigma = E$, where $E = \{0,1\}$. It is easy to generalize the considered examples to the case of an arbitrary finite alphabet $\Sigma$ with at least two letters. In the examples of sources, the initial node
is labeled with the symbol $+$, and  all nodes are
terminal.

\begin{figure}[th]
\centering
\includegraphics[width=0.32\textwidth]{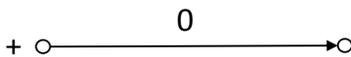}
\caption{Source $I_{1}$}
\label{Fig1}
\end{figure}

 Denote by $I_{1}$ the source over the alphabet $E$ depicted in Fig. \ref{Fig1}. One can show that $%
I_{1}$ is an independent simple t-reduced source and $cl(I_{1})=0$.
This source generates the language $L_{I_{1}}=\{\lambda,0\}$, which is factorial.
Therefore $L_{I_{1}}\in \mathcal{F}_{1}$.

\begin{figure}[h]
\centering
\includegraphics[width=0.17\textwidth]{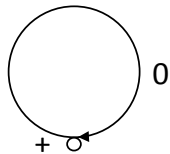}
\caption{Source $I_{2}$}
\label{Fig2}
\end{figure}

Denote by $I_{2}$ the source over the alphabet $E$ depicted in Fig. \ref{Fig2}. One can
show that $I_{2}$ is an independent simple t-reduced source and $%
cl(I_{2})=1$. This source generates the language $L_{I_{2}}=\{0^{i}:i\in
\omega \}$, which is factorial. Therefore $L_{I_{2}}\in $ $\mathcal{F}_{2}$.

\begin{figure}[th]
\centering
\includegraphics[width=0.41\textwidth]{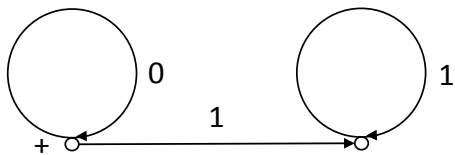}
\caption{Source $I_{3}$}
\label{Fig3}
\end{figure}

 Denote by $I_{3}$ the source over the alphabet $E$ depicted in
Fig. \ref{Fig3}. One can show that $I_{3}$ is an independent simple t-reduced source
and $cl(I_{1})=2$. This source generates the language $L_{I_{3}}=%
\{0^{i}1^{j}:i,j\in \omega \}$, which is factorial. Therefore $L_{I_{3}}\in $
$\mathcal{F}_{3}$.

\begin{figure}[th]
\centering
\includegraphics[width=0.41\textwidth]{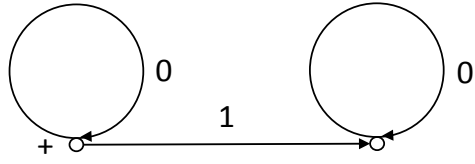}
\caption{Source $I_{4}$}
\label{Fig4}
\end{figure}

Denote by $I_{4}$ the source over the alphabet $E$ depicted in Fig. \ref{Fig4}. One can
show that $I_{4}$ is a dependent simple  t-reduced source generating the language $%
L_{I_{4}}=\{0^{i}1^{j}0^{k}:i,k\in \omega ,j\in \{0,1\}\}$, which is
factorial. It is clear that $L_{I_{4}}^{C}\neq \emptyset $. Therefore $%
L_{I_{4}}\in $ $\mathcal{F}_{4}$.

\begin{figure}[th]
\centering
\includegraphics[width=0.17\textwidth]{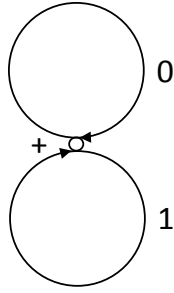}
\caption{Source $I_{5}$}
\label{Fig5}
\end{figure}

 Denote by $I_{5}$ the source over the alphabet $E$
depicted in Fig. \ref{Fig5}. One can show that $I_{5}$ is a t-reduced source that is
not simple. This source generates the language $L_{I_{5}}=E^{\ast }$, which
is factorial. It is clear that $L_{I_{5}}^{C}=\emptyset $. Therefore $%
L_{I_{5}}\in $ $\mathcal{F}_{5}$.

A regular factorial language $L$ can have different t-reduced sources, which
generate it. However, for each of such sources $I$, the language $L_{I}=L$
will belong to the same complexity class. Let us assume the contrary: there
exist a regular factorial language $L$ and two t-reduced sources $I_{1}$
and $I_{2}$, which generate it and for which languages $L_{I_{1}}$ and $%
L_{I_{2}}$ belong to different complexity classes. Then, for some pair $%
bc\in \{rd,ra,md,ma\}$, the functions $H_{L_{I_{1}}}^{bc}$ and $%
H_{L_{I_{2}}}^{bc}$  have different behavior, but this is impossible
since $H_{L_{I_{1}}}^{bc}(n)=H_{L_{I_{2}}}^{bc}(n)$ for any natural $n$.

\subsection{Languages Over Alphabet $\{0,1\}$ Given by One Forbidden Word}

Let $E=\{0,1\}$, $\alpha \in E^{\ast }$, and $\alpha \neq \lambda $. We denote by $L(\alpha )$ the language over the alphabet $E$, which consists of all words from $E^{\ast }$ that does not contain $\alpha $ as a factor. This is  a regular factorial language with $\mathit{MF}(L(\alpha))=\{\alpha \}$.
The following theorem indicates for each nonempty word $\alpha \in E^{\ast }$  the complexity class  $\mathcal{F}_{i}$ to which the language $L(\alpha )$ belongs.

\begin{theorem}
\label{T4} Let $\alpha \in E^{\ast }$ and $\alpha \neq \lambda $.

(a) If $\alpha \in \{0,1\}$, then $L(\alpha )\in \mathcal{F}_{2}$.

(b) If $\alpha \in \{01,10\}$, then $L(\alpha )\in \mathcal{F}_{3}$.

(c) If $\alpha \notin \{0,1,01,10\}$, then $L(\alpha )\in
\mathcal{F}_{4}$.
\end{theorem}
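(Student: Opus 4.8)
The plan is to analyze, for each nonempty word $\alpha \in E^{\ast}$, the structure of the canonical t-reduced source generating $L(\alpha)$ and then classify it according to the three structural dichotomies appearing in Table~\ref{tab1}: whether the source is an independent simple source, the value of $cl(I)$, and whether $L(\alpha)^{C}$ is empty. Since $\mathit{MF}(L(\alpha)) = \{\alpha\}$, the natural source to use is the one whose nodes are the proper prefixes of $\alpha$ (equivalently, the states of the Aho--Corasick / Knuth--Morris--Pratt automaton for the single pattern $\alpha$, with the accepting "pattern found" sink removed because it corresponds to forbidden words). Concretely, if $|\alpha| = t$, the graph $G$ has $t$ nodes $q_0, q_1, \ldots, q_{t-1}$ where $q_k$ represents "the longest suffix of the input read so far that is a prefix of $\alpha$ has length $k$", the initial node is $q_0$, all nodes are terminal, and from each $q_k$ there are two outgoing edges (one per letter of $E$) going to the appropriate failure-function target, \emph{except} that the edge from $q_{t-1}$ on the letter $\alpha_t$ (which would complete $\alpha$) is deleted. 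One should first verify that this source is t-reduced and generates exactly $L(\alpha)$; this is standard KMP reasoning and I would state it with a brief justification rather than a full proof.

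Next I would dispatch the three cases. For part~(a), $\alpha \in \{0,1\}$: the source has a single node $q_0$ with one self-loop (on the letter $\neq \alpha$) and the other edge deleted; this source has exactly one elementary cycle, so it is trivially simple, and one checks it is independent (there is only one elementary cycle, so the dependence condition, which requires \emph{two different} elementary cycles, cannot be met). Hence $I$ is an independent simple source with $cl(I) = 1$, placing $L(\alpha)$ in $\mathcal{F}_2$. For part~(b), $\alpha \in \{01, 10\}$: by symmetry take $\alpha = 01$; the source has nodes $q_0, q_1$. From $q_0$: letter $0 \to q_1$, letter $1 \to q_0$ (self-loop). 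From $q_1$: letter $0 \to q_1$ (self-loop), letter $1$ would complete $01$ so that edge is deleted. So there are exactly two elementary cycles, the loop at $q_0$ and the loop at $q_1$, they share no node, so $I$ is simple; a path can traverse both loops (e.g. stay at $q_0$, then go to $q_1$, then stay), so $cl(I) = 2$; and one must check independence: the two loops generate the periodic words $\overline{1}$ and $\overline{0}$ respectively, which are different, so $W(I,C_1,v_1) \neq W(I,C_2,v_2)$ for the two loops and the dependence condition fails. Hence $I$ is independent simple with $cl(I) \geq 2$, so $L(\alpha) \in \mathcal{F}_3$. Also $L(01)^C \neq \emptyset$ trivially, consistent with $\mathcal{F}_3 \subseteq \mathcal{F}_4$-column behavior, but membership in $\mathcal{F}_3$ is what we claim.

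For part~(c), $\alpha \notin \{0,1,01,10\}$, so $t = |\alpha| \geq 2$ and $\alpha \notin \{01,10\}$, meaning $\alpha$ is not the alternating word of length $2$; I split into the subcase $t = 2$ (so $\alpha \in \{00, 11\}$) and the subcase $t \geq 3$. In both subcases the goal is to exhibit either two distinct elementary cycles sharing a node (making $I$ not simple, hence in $\mathcal{F}_4$), or, if $I$ is simple, two distinct elementary cycles with a period-respecting connecting path witnessing dependence. For $\alpha = 00$: node $q_0$ has self-loop on $1$ and edge $0 \to q_1$; node $q_1$ has self-loop... no: from $q_1$, letter $0$ completes $00$ and is deleted, letter $1 \to q_0$. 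So the cycles are the loop at $q_0$ (word $\overline 1$) and the two-edge cycle $q_0 \xrightarrow{0} q_1 \xrightarrow{1} q_0$ (word $\overline{01}$); these share the node $q_0$, so $I$ is \emph{not} simple, giving $L(00) \in \mathcal{F}_4$. For $t \geq 3$, I would argue that the source always contains a node lying on two distinct elementary cycles: one can, for instance, show that from $q_0$ the self-loop on the letter $b \neq \alpha_1$ always exists (so $q_0$ lies on an elementary cycle of length $1$), and separately that there is some elementary cycle through $q_0$ of length $> 1$ (since $L(\alpha)$ with $|\alpha| \geq 3$ contains words that force the automaton to leave $q_0$ and return, e.g. the word $\alpha_1$ followed by a suitable continuation that does not create $\alpha$ — such a continuation exists because $\alpha_1 b$ for the other letter $b$ cannot equal a length-$2$ prefix issue when $t\ge 3$, needing a short case check on whether $\alpha_1\alpha_1$ or $\alpha_1 b$ is forbidden, but $\alpha$ has length $\ge 3$ so neither length-$2$ word is forbidden). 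The main obstacle is exactly this last subcase: making the "there is always a second elementary cycle through $q_0$, and it is distinct from the length-$1$ loop" argument airtight and uniform over all $\alpha$ with $|\alpha| \geq 3$ — one has to handle words like $\alpha = 0^t$ carefully (where the failure structure is a single long path of loops-removed) versus words with non-trivial borders. I expect the cleanest route is: observe $0, 1 \in L(\alpha)$ and $00, 11, 01, 10 \in L(\alpha)$ when $t \geq 3$, hence the automaton started at $q_0$ on input $bb$ (for either letter $b$) returns to or stays near $q_0$ without dying, yielding two distinct short elementary cycles sharing $q_0$, so $I$ is not simple and $L(\alpha) \in \mathcal{F}_4$; this avoids ever needing the dependence condition for part~(c).
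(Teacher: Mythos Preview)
Your treatment of parts (a), (b), and the length-$2$ subcase of (c) matches the paper's proof essentially verbatim: same source construction, same cycle analysis, same symmetry reduction.

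The gap is in your handling of $|\alpha|\ge 3$. Your proposed ``cleanest route'' asserts that reading $bb$ from $q_0$ for each letter $b$ produces two distinct elementary cycles \emph{through $q_0$}. This is false already for $\alpha=011$: from $q_0$ the edge on $1$ is a self-loop, while the edge on $0$ goes to $q_1$, and from $q_1$ both outgoing edges stay in $\{q_1,q_2\}$ (on $0$ we loop at $q_1$; on $1$ we go to $q_2$; from $q_2$ the surviving edge on $0$ returns to $q_1$). So $q_0$ lies on exactly one elementary cycle, and your argument does not exhibit non-simplicity. The source \emph{is} non-simple --- the self-loop at $q_1$ and the $2$-cycle $q_1\to q_2\to q_1$ share $q_1$ --- but your argument does not find this, and a uniform ``two cycles through $q_0$'' claim cannot be repaired without further casework.

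The paper sidesteps this entirely with a short extension lemma (Lemma~\ref{L3}): if $L(\alpha)\in\mathcal{F}_4$ then $L(\alpha\beta)\in\mathcal{F}_4$ for every $\beta$, proved via the inclusion $L(\alpha)\subseteq L(\alpha\beta)$ and the resulting monotonicity of $H^{rd}$ and $H^{ra}$. With this lemma in hand, part (c) reduces to checking the eight words of length $3$ explicitly (four after the bit-flip symmetry of Lemma~\ref{L2}), each by drawing the three-node source and spotting two cycles sharing a node. This is both shorter and avoids the uniform structural claim you were struggling to make airtight.
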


We now describe a t-reduced source $I(\alpha )$ that generates the language $L(\alpha )$ for a  nonempty word $\alpha \in E^{\ast }$.
Let  $\alpha =a_{1}\cdots a_{n}$, $\alpha _{0}=\lambda $, and $%
\alpha _{i}=$ $a_{1}\cdots a_{i}$ for $i=1,\ldots ,n-1$. The set $P(\alpha
)=\{\alpha _{0},\alpha _{1},\ldots ,\alpha _{n-1}\}$ is the set of all
proper prefixes of the word $\alpha $. Then $I(\alpha )=(G,q_{0},Q)$, where
the set of nodes of the graph $G$ is equal to $P(\alpha )$, $q_{0}=\alpha
_{0}$, and $Q=P(\alpha )$. For $i=0,\ldots ,n-2$, an edge leaves the node $%
\alpha _{i}$ and enters the node $\alpha _{i+1}$. This edge is labeled with
the letter $a_{i+1}$. For $i=0,\ldots ,n-1$, an edge leaves the node $\alpha
_{i}$ and enters the node $\alpha _{j}\in P(\alpha )$ such that $\alpha _{j}$
is the longest suffix of the word $\alpha _{i}\bar{a}_{i+1}$, where $\bar{a}%
_{i+1}=0$ if $a_{i+1}=1$ and $\bar{a}_{i+1}=1$ if $a_{i+1}=0$.  This edge is labeled with
the letter $\bar{a}_{i+1}$. It is easy to show that $I(\alpha )$ is a t-reduced source over the alphabet $E$. From Theorem
10 \cite{Crochemore98} it follows that the  source $I(\alpha )$ generates the
language $L(\alpha )$.

Let $\alpha \in E^{\ast }\setminus \{\lambda \}$ and $\alpha =a_{1}\cdots
a_{n}$. We denote by $\bar{\alpha}$  the word $\bar{a}_{1}\cdots \bar{a}_{n}$%
. It is easy to prove the following statement.

\begin{lemma}
\label{L2}Let $\alpha \in E^{\ast }$ and $\alpha \neq \lambda $. Then $H_{L(%
\bar{\alpha})}^{bc}(n)=H_{L(\alpha )}^{bc}(n)$ for any pair $bc\in \{rd,ra,md,$ $ma\}$ and any natural $n$.
\end{lemma}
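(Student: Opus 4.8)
The plan is to establish a bijection between the words in $L(\alpha)$ of length $n$ and the words in $L(\bar\alpha)$ of length $n$, and to check that this bijection is compatible with the decision-tree machinery in such a way that depths are preserved. Define the \emph{complementation map} $\mu : E^{\ast} \to E^{\ast}$ by $\mu(b_{1}\cdots b_{n}) = \bar b_{1}\cdots \bar b_{n}$. This map is clearly an involution, it preserves lengths, and it maps $\Sigma^{n}$ bijectively onto $\Sigma^{n}$. First I would observe that $w$ contains $\alpha$ as a factor if and only if $\mu(w)$ contains $\bar\alpha = \mu(\alpha)$ as a factor; this is immediate because $\mu$ commutes with concatenation and is a bijection, so $w = v_{1}\alpha v_{2}$ iff $\mu(w) = \mu(v_{1})\bar\alpha\,\mu(v_{2})$. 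Consequently $w \in L(\alpha)$ iff $\mu(w) \in L(\bar\alpha)$, and $\mu$ restricts to a bijection $L(\alpha)(n) \to L(\bar\alpha)(n)$ for every natural $n$. Also $\mu$ maps $\Sigma^{n}\setminus L(\alpha)(n)$ bijectively onto $\Sigma^{n}\setminus L(\bar\alpha)(n)$.

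Next I would transport decision trees across $\mu$. Given a decision tree $\Gamma$ over $L(\alpha)(n)$, construct a tree $\mu(\Gamma)$ over $L(\bar\alpha)(n)$ by keeping the same underlying tree, the same attributes $l_{i}^{n}$ at the internal nodes, replacing each edge label $b \in E$ by $\bar b$, and replacing each terminal label: in the recognition case a terminal word $w$ is replaced by $\mu(w)$, and in the membership case the labels $0$ and $1$ are left unchanged. The key point is that for any complete path $\xi$ in $\Gamma$ and the corresponding complete path $\mu(\xi)$ in $\mu(\Gamma)$, one has $\Sigma(n,\mu(\xi)) = \mu(\Sigma(n,\xi))$, since flipping the prescribed letter $b_{j}$ to $\bar b_{j}$ on the same coordinate $i_{j}$ exactly describes the $\mu$-image of the original cylinder set. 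From this identity and the bijection $\mu$ between the relevant word sets, I would verify directly that $\mu(\Gamma)$ solves the corresponding problem (recognition or membership, deterministically or nondeterministically) for $L(\bar\alpha)(n)$ whenever $\Gamma$ solves it for $L(\alpha)(n)$: each of the defining conditions in Section~\ref{sect2.2} translates verbatim under $\mu$. Moreover $\mu(\Gamma)$ has exactly the same depth as $\Gamma$, since the underlying tree is unchanged. Hence $h_{L(\bar\alpha)}^{bc}(n) \le h_{L(\alpha)}^{bc}(n)$ for each $bc \in \{rd,ra,md,ma\}$, and applying the same argument with $\bar\alpha$ in place of $\alpha$ (using $\bar{\bar\alpha} = \alpha$) gives the reverse inequality, so $h_{L(\bar\alpha)}^{bc}(n) = h_{L(\alpha)}^{bc}(n)$. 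Taking maxima over $1 \le m \le n$ yields $H_{L(\bar\alpha)}^{bc}(n) = H_{L(\alpha)}^{bc}(n)$, which is the claim.

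The routine but slightly tedious part is the case check that $\mu(\Gamma)$ inherits the property of solving each of the four problem variants; this amounts to substituting $\mu$ into each bulleted condition and using the two facts $\Sigma(n,\mu(\xi)) = \mu(\Sigma(n,\xi))$ and ``$w \in L(\alpha)(n) \iff \mu(w) \in L(\bar\alpha)(n)$'', so there is no real obstacle, only bookkeeping. The only point that deserves a moment's care is the membership case, where terminal nodes carry labels in $\{0,1\}$ rather than words: here one must \emph{not} flip the terminal labels, precisely because the relabelling must track whether $\mu(w)$ lies in $L(\bar\alpha)(n)$, and that happens exactly when $w$ lies in $L(\alpha)(n)$, i.e.\ when the original terminal label is $1$. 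Since the lemma is stated as ``easy to prove'', I would present the argument compactly: define $\mu$, state the factor-preservation observation, describe the tree transformation, assert depth preservation and the solving property with a one-line justification via the cylinder-set identity, and conclude by symmetry.
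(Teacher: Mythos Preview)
Your argument is correct and is precisely the natural symmetry argument the paper has in mind; note that the paper does not actually give a proof of this lemma but merely asserts that it ``is easy to prove,'' and your complementation map $\mu$ together with the transport of decision trees is exactly the obvious way to cash out that remark.
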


\begin{lemma}
\label{L3}Let $\alpha \in E^{\ast }\setminus \{\lambda \}$, $\beta \in
E^{\ast }$, and $L(\alpha)\in \mathcal{F}_{4}$. Then $L(\alpha \beta)\in \mathcal{F}_{4}$.
\end{lemma}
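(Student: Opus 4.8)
\textbf{Proof proposal for Lemma \ref{L3}.}

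The plan is to show that the defining condition for membership in $\mathcal{F}_4$ — namely that the t-reduced source $I(\alpha\beta)$ is not an independent simple source while having nonempty complement — is \emph{inherited} when we extend the forbidden word $\alpha$ by an arbitrary suffix $\beta$. Since $L(\alpha\beta)$ is always a regular factorial language with $L(\alpha\beta)^C \neq \emptyset$ (the word $\alpha\beta$ itself is forbidden, hence the complement is nonempty), by the case analysis of Theorem \ref{T4} and the exhaustive disjoint partition $\mathcal{F}_1,\ldots,\mathcal{F}_5$, it suffices to rule out $L(\alpha\beta)\in\mathcal{F}_1\cup\mathcal{F}_2\cup\mathcal{F}_3\cup\mathcal{F}_5$. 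The class $\mathcal{F}_5$ is excluded immediately since $L(\alpha\beta)^C\neq\emptyset$. Thus the real content is: if $L(\alpha)$ is \emph{not} generated by an independent simple source, then neither is $L(\alpha\beta)$.

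The cleanest route is via the characterization from Theorem \ref{T4} itself rather than manipulating the automaton $I(\alpha\beta)$ directly. By Theorem \ref{T4}, $L(\alpha)\in\mathcal{F}_4$ precisely when $\alpha\notin\{0,1,01,10\}$, i.e.\ when $|\alpha|\geq 3$, or $|\alpha|=2$ with $\alpha\in\{00,11\}$. Now I claim that in every such case $\alpha\beta\notin\{0,1,01,10\}$ as well: if $|\alpha|\geq 3$ then $|\alpha\beta|\geq 3$, so $\alpha\beta\notin\{0,1,01,10\}$ and Theorem \ref{T4}(c) gives $L(\alpha\beta)\in\mathcal{F}_4$; if $\alpha\in\{00,11\}$ and $\beta=\lambda$ there is nothing to prove, while if $|\beta|\geq 1$ then $|\alpha\beta|\geq 3$ and again Theorem \ref{T4}(c) applies. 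This reduces the lemma to a one-line observation once Theorem \ref{T4} is in hand, and the only thing to be careful about is that the hypothesis $L(\alpha)\in\mathcal{F}_4$ is used \emph{only} to pin down that $\alpha\notin\{0,1,01,10\}$ via Theorem \ref{T4}(a)--(c) (the classes are disjoint, so $L(\alpha)\in\mathcal{F}_4$ forces the case in part (c)).

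If instead one wants a proof independent of Theorem \ref{T4} — which may be the intended order, since Lemma \ref{L3} looks like it is meant to be used \emph{in the proof of} Theorem \ref{T4} — then I would argue structurally about the source. The source $I(\alpha\beta)$ has node set $P(\alpha\beta)$, the proper prefixes of $\alpha\beta$, which contains $P(\alpha)\cup\{\alpha\}$ as an initial segment; the edges among $\alpha_0,\ldots,\alpha_{|\alpha|-1}$ and the ``failure'' back-edges from these nodes coincide with those of $I(\alpha)$ except possibly at the single node $\alpha_{|\alpha|-1}$, whose forward edge labeled $a_{|\alpha|}$ now goes to a genuine node $\alpha=\alpha_{|\alpha|}$ rather than triggering a failure transition. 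From the hypothesis $L(\alpha)\in\mathcal{F}_4$ we extract a witness to non-independence-or-non-simplicity inside $I(\alpha)$: either two distinct elementary cycles sharing a node, or a dependent pair of cycles as in the definition. The main obstacle, and the step deserving the most care, is verifying that this witness survives inside $I(\alpha\beta)$ — the subgraph of $I(\alpha\beta)$ on the nodes $P(\alpha)$ is \emph{not} literally equal to $I(\alpha)$ because of the altered transition at $\alpha_{|\alpha|-1}$, so one must check that the particular cycles/paths exhibited do not pass through that one edge in a way that changes them, or else re-route through the new node $\alpha$ and the tail $P(\alpha\beta)\setminus P(\alpha)$, using that this tail forms a simple path whose only exits are failure edges back into $P(\alpha)$. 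This bookkeeping is where all the work sits; everything else is routine. Given that Theorem \ref{T4} is already stated in the excerpt, I would present the short argument above and defer the structural automaton analysis to the proof of Theorem \ref{T4}.
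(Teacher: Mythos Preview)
Your first approach is circular: Lemma~\ref{L3} is invoked \emph{in the proof of} Theorem~\ref{T4} (to pass from words of length two and three to all longer words), so you cannot appeal to Theorem~\ref{T4} to establish Lemma~\ref{L3}. You noticed this possibility yourself, but then your final recommendation is still to ``present the short argument above,'' which leaves the circularity unresolved. Your second, structural approach is only a sketch: you correctly identify that the transition at $\alpha_{|\alpha|-1}$ changes when passing from $I(\alpha)$ to $I(\alpha\beta)$, and that the cycle witness may pass through that edge; but you do not actually carry out the case analysis, and this is precisely the non-trivial part.

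The paper's proof bypasses both of these difficulties with a much simpler idea that you did not consider: work with the complexity functions rather than the automata. From $L(\alpha)\in\mathcal{F}_4$ one has $H_{L(\alpha)}^{rd}(n)=\Theta(n)$ and $H_{L(\alpha)}^{ra}(n)=\Theta(n)$. Since every word avoiding $\alpha$ also avoids $\alpha\beta$, we get the containment $L(\alpha)\subseteq L(\alpha\beta)$, and hence $L(\alpha)(n)\subseteq L(\alpha\beta)(n)$ for every $n$. Any (deterministic or nondeterministic) decision tree solving recognition for $L(\alpha\beta)(n)$ in particular distinguishes all words of the subset $L(\alpha)(n)$, so $H_{L(\alpha)}^{rd}(n)\le H_{L(\alpha\beta)}^{rd}(n)$ and $H_{L(\alpha)}^{ra}(n)\le H_{L(\alpha\beta)}^{ra}(n)$. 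By Theorems~\ref{T1} and~\ref{T2} this forces $H_{L(\alpha\beta)}^{rd}(n)=\Theta(n)$ and $H_{L(\alpha\beta)}^{ra}(n)=\Theta(n)$, which already rules out $\mathcal{F}_1,\mathcal{F}_2,\mathcal{F}_3$. Finally $L(\alpha\beta)^C\neq\emptyset$ (since $\alpha\beta\notin L(\alpha\beta)$) rules out $\mathcal{F}_5$, and infiniteness of $L(\alpha\beta)$ (the loop at node $\lambda$ in $I(\alpha\beta)$) together with Theorem~\ref{T3} gives the membership bounds. This argument uses only Theorems~\ref{T1}--\ref{T3}, so there is no circularity, and it avoids any delicate automaton surgery.
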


\begin{proof}
Since  $L(\alpha)\in \mathcal{F}_{4}$, $H_{L(\alpha )}^{rd}(n)=\Theta
(n)$ and $H_{L(\alpha )}^{ra}(n)=\Theta (n)$.
One can show that $L(\alpha )\subseteq L(\alpha \beta )$. Using this fact it is not difficult to prove that $%
H_{L(\alpha )}^{rd}(n)\leq H_{L(\alpha \beta )}^{rd}(n)$ and $H_{L(\alpha
)}^{ra}(n)\leq H_{L(\alpha \beta )}^{ra}(n)$ for any natural $n$. From here and from
Theorems \ref{T1} and \ref{T2} it follows that $H_{L(\alpha \beta
)}^{rd}(n)=\Theta (n)$ and $H_{L(\alpha \beta )}^{ra}(n)=\Theta (n)$.

Since $\alpha \beta  \notin L(\alpha \beta )$, $L(\alpha \beta )^{C}\neq \emptyset $. The
source $I(\alpha \beta )$ contains at least one circle formed by the edge that
leaves and enters the node $\lambda $ and is labeled with the letter $\bar{a}_{1}$,
where $a_{1}$ is the first letter of the word $\alpha $. Therefore the
language $L(\alpha \beta )$ is infinite. By Theorem \ref{T3},  $%
H_{L(\alpha \beta )}^{md}(n)=\Theta (n)$ and $H_{L(\alpha \beta )}^{ma}(n)=\Theta (n)$. Thus, $L(\alpha \beta)\in \mathcal{F}_{4}$.
\end{proof}

\begin{proof}[of Theorem \ref{T4}]
 In each figure depicting a source $I(\alpha)$, $\alpha \in E^{\ast }\setminus \{\lambda \}$, we label each node with a corresponding prefix of the word $\alpha$.

\begin{figure}[th]
\centering
\includegraphics[width=0.16\textwidth]{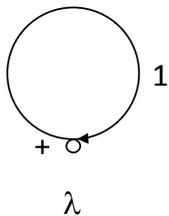}
\caption{Source $I(0)$}
\label{Fig6}
\end{figure}
(a) The source $I(0)$ is depicted in Fig. \ref{Fig6}. This is an independent  simple
t-reduced source with $cl(I(0))=1$. Therefore $L(0)\in \mathcal{F}_{2}$. By Lemma \ref{L2}, $L(1)\in \mathcal{F}%
_{2} $.

\begin{figure}[h]
\centering
\includegraphics[width=0.4\textwidth]{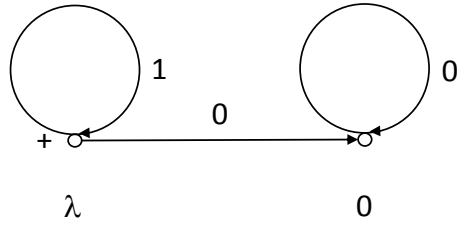}
\caption{Source $I(01)$}
\label{Fig7}
\end{figure}

(b) The source $I(01)$ is depicted in Fig. \ref{Fig7}. This is an independent simple
t-reduced source with $cl(I(01))=2$. Therefore $L(01)\in \mathcal{F}_{3}$. By Lemma \ref{L2}, $L(10)\in
\mathcal{F}_{3}$.

\begin{figure}[h]
\centering
\includegraphics[width=0.33\textwidth]{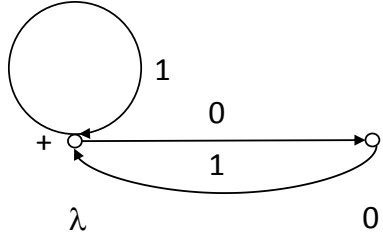}
\caption{Source $I(00)$}
\label{Fig8}
\end{figure}

(c) The source $I(00)$ is depicted in Fig. \ref{Fig8}. This is not a simple source.
It is clear that $L(00)^C\neq \emptyset$. Therefore $L(00)\in
\mathcal{F}_{4}$. By Lemma \ref{L2}, $L(11)\in \mathcal{F}_{4}$. Using
Lemma  \ref{L3} we obtain $L(000),L(001),L(110),L(111)\in
\mathcal{F}_{4}$.

\begin{figure}[th!]
\centering
\includegraphics[width=0.6\textwidth]{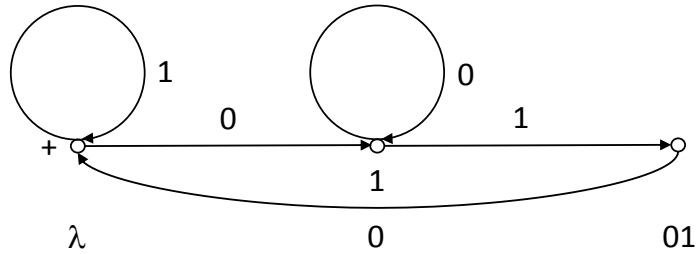}
\caption{Source $I(010)$}
\label{Fig9}
\end{figure}

The source $I(010)$ is depicted in Fig. \ref{Fig9}. This is not a simple source.
 It is clear that $L(010)^C\neq \emptyset$. Therefore $L(010)\in
\mathcal{F}_{4}$. By Lemma \ref{L2}, $L(101)\in \mathcal{F}_{4}$.

\begin{figure}[h]
\centering
\includegraphics[width=0.6\textwidth]{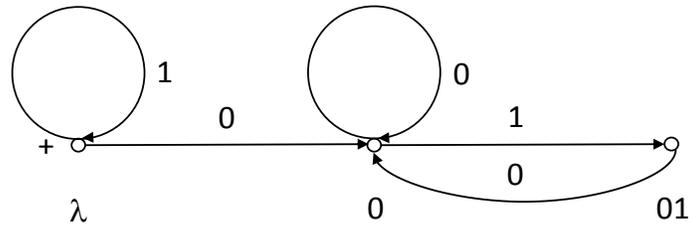}
\caption{Source $I(011)$}
\label{Fig10}
\end{figure}

The source $I(011)$ is depicted in Fig. \ref{Fig10}. This is not a simple source.
It is clear that $L(011)^C\neq \emptyset$. Therefore $L(011)\in
\mathcal{F}_{4}$. By Lemma \ref{L2}, $L(100)\in \mathcal{F}_{4}$.

We proved that, for any word $\alpha \in E^{\ast }$ of the length three, $%
L(\alpha )\in \mathcal{F}_{4}$. Using Lemma  \ref{L3} we
obtain that, for any word $\alpha \in E^{\ast }$ of the length greater than or equal to four, $L(\alpha )\in \mathcal{F}_{4}$. 
\end{proof}

\subsection*{Acknowledgments}
Research reported in this publication was supported by King Abdullah
University of Science and Technology (KAUST).

\bibliographystyle{spmpsci}
\bibliography{fr-languages}

\begin{thebibliography}{1}
\providecommand{\url}[1]{{#1}}
\providecommand{\urlprefix}{URL }
\expandafter\ifx\csname urlstyle\endcsname\relax
  \providecommand{\doi}[1]{DOI~\discretionary{}{}{}#1}\else
  \providecommand{\doi}{DOI~\discretionary{}{}{}\begingroup
  \urlstyle{rm}\Url}\fi

\bibitem{Crochemore98}
Crochemore, M., Mignosi, F., Restivo, A.: Automata and forbidden words.
\newblock Inf. Process. Lett. \textbf{67}(3), 111--117 (1998)

\bibitem{Haines67}
Haines, L.H.: On free monoids partially ordered by embedding.
\newblock J. Comb. Theory \textbf{6}, 94--98 (1969)

\bibitem{Markov82}
Markov, A.A.: {Introduction into Coding Theory} (in Russian).
\newblock Nauka, Mos\-cow (1982)

\bibitem{Moshkov97}
Moshkov, M.: Complexity of deterministic and nondeterministic decision trees
  for regular language word recognition.
\newblock In: S.~Bozapalidis (ed.) Proceedings of the 3rd International
  Conference Developments in Language Theory, {DLT} 1997, Thessaloniki, Greece,
  July 20--23, 1997, pp. 343--349. Aristotle University of Thessaloniki (1997)

\bibitem{Moshkov00}
Moshkov, M.: Decision trees for regular language word recognition.
\newblock Fundam. Inform. \textbf{41}(4), 449--461 (2000)

\bibitem{Moshkov20}
Moshkov, M.: Decision trees for binary subword-closed languages.
\newblock CoRR \textbf{abs/2201.01493} (2022).
\newblock \urlprefix\url{https://arxiv.org/abs/2201.01493}

\end{thebibliography}

\end{document}